\documentclass{ifacconf}
\usepackage{amsfonts, amsmath, amssymb, bbm}
\usepackage{graphicx}      
\usepackage{natbib}        
\usepackage{xcolor}

\newtheorem{definition}{Definition}

\newtheorem{lemma}{Lemma}

\newtheorem{proposition}{Proposition}

\newtheorem{remark}{Remark}

\newcommand{\control}{\xi}
\newcommand{\statistics}{p}
\newcommand{\cost}{\gamma}

\newcommand{\tcb}{\textcolor{black}}

\newcommand{\mc}{\mathcal}
\newcommand{\de}{\mathrm{d}}
\begin{document}
\begin{frontmatter}

\title{Optimal Interventions on the Linear Threshold Model in Large-Scale Networks} 

\author[First]{Leonardo Cianfanelli}
\author[First]{Sebastiano Messina}  
\author[First]{Giacomo Como}
\author[First]{Fabio Fagnani}

\address[First]{Department of Mathematical Sciences, Politecnico di Torino, Corso Duca degli Abruzzi 24, 10129, Torino, Italy\\
$($Email: \{{leonardo.cianfanelli, sebastiano.messina, giacomo.como, fabio.fagnani}\}@polito.it.$)$}

\begin{abstract}                
    We study an optimal intervention problem on the linear threshold model (LTM) in which a social planner aims to design minimal-cost interventions that modify the agents' thresholds, under the constraint that at least a predefined fraction of agents reaches a given state after a finite number of iterations.  While this problem is known to be NP-hard and its exact solution requires full knowledge of the network structure, we focus on approximate solutions for large-scale networks and assume that the planner has only  statistical knowledge of the network. In particular, we build on a local mean-field approximation of the LTM that is known to hold true on large-scale random networks, and reformulate the optimal intervention problem as a linear program with an infinite set of constraints. We then show how to approximate the solutions of the latter problem by standard linear programs with finitely many constraints. \tcb{Finally, our approach is validated through numerical experiments on real-world networks and compared both with optimal seeding and state-of-the-art algorithms for the least-cost influence.}
\end{abstract}

\begin{keyword}
Network Systems; Linear Threshold Model; Optimal Intervention; Least-cost Influence Problem.
\end{keyword}

\end{frontmatter}

\section{Introduction}
\tcb{
The study of the diffusion of behaviors, innovations, and opinions over social networks has become a central topic across sociology, economics, computer science, and control. A foundational contribution is the linear threshold model (LTM) introduced by the seminal work of \cite{Granovetter:78}, whereby individuals adopt a behavior once the fraction of adopting neighbors exceeds a personal threshold. This framework demonstrates how heterogeneous micro-level preferences can generate aggregate discontinuities and large-scale cascades.}

\tcb{The algorithmic study of diffusion in networks was significantly shaped by  \cite{Kempe.ea:03}, who first introduced the influence maximization problem (IMP), whereby a system planner aims to select a fixed number of seed nodes, with the objective of maximizing the asymptotic spread of adoption under the LTM over a social network. While the IMP is NP-hard, \cite{Kempe.ea:03} considered a randomized version, whereby the individual thresholds are sampled independently and uniformly from the interval $[0,1]$ and the system planner aims at maximizing the expected spread of the LTM over a given network. They proved that such randomized version of the IMP is a submodular problem, so that a low-complexity, iterative greedy algorithm exists achieving a $(1-1/e)$ approximation. Subsequent studies focused on how to optimize the seed choice in each round of the greedy algorithm and proposed algorithmic solutions (\cite{goyal2011simpath}).}

\tcb{In parallel, researchers investigated cost-minimization variants of the IMP. The Target Set Selection (TSS) problem, formalized by \cite{chen2009approximability}, is the dual problem, which seeks the smallest initial set of seeds that eventually activate the entire network under the LTM. We refer to  \cite{chen2013information} and \cite{Como.ea:22} for an overview on these problems and algorithmic solutions. Weighted extensions (WTSS) proposed in \cite{raghavan2019branch} and \cite{Cordasco:2015}  allow for heterogeneous costs.}

\tcb{A significant step toward economically meaningful interventions is the Least-Cost Influence Problem (LCIP) introduced in \cite{gunnec2012integrating} and subsequently analyzed in \cite{gunnecc2020branch,gunnecc2020least}, where the planner can assign fractional incentives to individuals. Rather than restricting interventions to binary seeding decisions, this framework allows partial reductions of individual thresholds, thereby modeling more realistic policies such as targeted discounts or subsidies, with the goal of minimizing the intervention cost while ensuring a maximal spread of the adoption. Related fractional formulations have also been explored by \cite{demaine2014influence} who generalize the IMP by seeking the maximum spread of adoption in a network with randomized thresholds with a given budget, while allowing fractional allocations that decrease the thresholds, highlighting that fractional allocations can substantially alter diffusion outcomes compared to the IMP, despite not increasing the computational complexity of the problem. Similar problems can be found in the literature with different names, see e.g., the Targeting with Partial Incentives (TPI) problem of \cite{Cordasco:2015}, who propose algorithms for its solution on networks with special topologies, such as the complete graph or trees, and heuristic solutions on arbitrary graphs.}

\tcb{The scalability of solutions to these problems remains a fundamental challenge in social networks, where the number of agents can be extremely large and only aggregate information may be available. At the same time, the study of diffusion processes on random networks has developed analytical tools based on configuration models (CMs) and branching process approximations. Mean-field techniques characterize cascade behavior in terms of degree distributions and threshold statistics, enabling tractable analysis of large populations without explicitly enumerating of nodes (see \cite{Rossi.ea:17} and \cite{Ravazzi}).}

\tcb{In this paper, we study the LCIP on general directed multigraphs and develop a formulation tailored to large-scale random networks. By leveraging local mean-field approximations for 
diffusion on the CM, we describe adoption dynamics at the statistical level rather than individual level. 
This allows us to recast the LCIP  as a linear program whose dimension is independent of the network size, subject to an infinite number of constraints.}
We then propose a discretization approach transforming it into a finite linear program, and prove that its solutions are feasible
and close to optimal for the original problem. In doing so, we find close to optimal and tractable solutions that do not scale with the network size.

\tcb{This approach was already explored by \cite{messina2024optimal} for the WTSS problem and \cite{cianfanelli2025optimal} for the design of interventions in opinion dynamics. Besides reducing complexity, it has further advantages: it does not require detailed structural information about the network, the threshold, or the intervention cost associated with each individual agent, which are often unavailable to the planner, because of their cost or privacy restrictions.}
\tcb{It is worth stressing that our use of randomness differs substantially from that of \cite{Kempe.ea:03} and \cite{demaine2014influence}, both because we consider random networks rather than deterministic ones, and since we can account for arbitrary joint distributions of thresholds and degrees rather than being restricted to independent uniform random thresholds. The solutions obtained by our approach have probabilistic guarantees for networks sampled from the CM, but do not have theoretical guarantees for deterministic networks. Nevertheless, we conduct a numerical analysis suggesting that our approach may be exploited to design interventions even in real networks with clustering and other realistic features that are not captured by the CM.}



The paper is organized as follows. Section \ref{Dynamics and Intervention} introduces the LTM and formulates the optimal intervention problem. In Section \ref{Optimal Intervention in large-scale networks}, we reformulate our problem for large-scale networks by the local mean-field approximation. Section \ref{Problem solutions} illustrates how to reduce the obtained optimization problem to a linear problem with finitely many constraints. Section~\ref{Numerical simulations} presents some numerical experiments, and Section \ref{sec:conclusion} summarizes the work.

\textbf{Notation}
    We denote the vectors of all ones and zeros, whose size may be deduced from the context, by $\mathbf{1}$ and $\mathbf{0}$, respectively. For a nonempty finite set $\mc A$, $\mathbb{R}^{\mc A}$ denotes the $|\mc A|$-dimensional space of vectors $x$ whose entries $x_a$ are indexed by elements $a$ of $\mc A$. 
    The transpose of a matrix $A$ in $\mathbb R^{\mc A\times\mc B}$ is denoted by $A'$. Inequalities between vectors are meant to hold true entry-wise, i.e., for $x$ and $y$ in $\mathbb R^{\mc A}$, we write $x\le y$ to indicate that $x_a\le y_a$ for every $a$ in $\mc A$. 

\section{Dynamics and Intervention}\label{Dynamics and Intervention}

We model networks as finite directed multigraphs $\mathcal{N} = (\mathcal{V}, \mathcal{E}, \theta, \lambda)$, where $\mathcal{V}$ is a node set, $\mathcal{E}$ is the set of directed links, and $\theta : \mathcal{E} \to \mathcal{V}$ and $\lambda: \mathcal{E} \to \mathcal{V}$ are two maps associating to each link its tail and head node, respectively. Let $n = |\mathcal{V}|$ be the network order and $A$ in $\mathbb{Z}^{\mathcal{V} \times \mathcal{V}}$ its adjacency matrix, with entries $A_{ij} = |\{e \in \mathcal{E} : \ \theta(e) = i, \lambda(e) = j\}|$. The vectors $\kappa = A\mathbf{1}$ and $\delta = A'\mathbf{1}$ collect the out- and in-degree of the nodes.
We assume that $\mathcal{N}$ contains no self-loops, i.e., $\theta(e) \neq \lambda(e)$ for every $e$ in $\mathcal{E}$.

\subsection{The Linear Threshold Model}
This section introduces the LTM. Within this discrete-time model, each agent $i = 1, \dots, n$ is endowed with a time-varying binary state $x_{i}(t)$ in $\{0, 1\}$ and a threshold $\rho_i$ in $\{0, 1, \dots, \kappa_i\}$, representing the minimum number of state-$1$ neighbors that agent $i$ must observe  in order to adopt state $1$ at the next time step.

Given a network $\mathcal{N} = (\mathcal{V}, \mathcal{E}, \theta, \lambda)$ and a vector of thresholds $\rho$, the LTM is the discrete-time dynamical system on the configuration space $\mathcal{X} = \{0, 1\}^{\mathcal{V}}$ defined by
\begin{equation}\label{LTD}
    x(t + 1) = \Phi_\rho(x(t))\,, \qquad \forall t \geq 0\,,
\end{equation}
    
where $\Phi_\rho : \mathcal{X} \to \mathcal{X}$ is the map defined by
\begin{equation}\label{eq:Phi}
    (\Phi_\rho(x))_i = 
    \begin{cases}
         1 & \text{if} \ \sum\nolimits_{j \in \mathcal{V}} A_{ij} x_j \geq \rho_i\\
         0 & \text{if} \ \sum\nolimits_{j \in \mathcal{V}} A_{ij} x_j < \rho_i
    \end{cases}
 \end{equation}
\begin{remark}\label{remark:monotone} For every network $\mc N$ and threshold vector $\rho$, the map $\Phi_\rho$ is non-decreasing, i.e., it preserves the partial ordering on $\mc X$. Hence, given two ordered initial conditions $x(0) \le \bar x(0)$, we have that $x(t) \le \bar x(t)$ for every time $t\ge 0$.
\end{remark}
\begin{remark}
\tcb{In the seminal paper by \cite{Kempe.ea:03}, the term \emph{linear threshold model} refers to \eqref{LTD} with random thresholds distributed uniformly at random. We remark that in this paper the thresholds are arbitrary.}
\end{remark}

\subsection{Intervention and optimization problem}
This section introduces the notion of intervention and formalizes the optimization problem that the planner aims to solve.
We assume that the planner can modify the agent thresholds.
Given a network $\mathcal{N} = (\mathcal{V}, \mathcal{E}, \theta, \lambda)$, threshold vector $\rho$, and  vector $h$ in $\mathbb{Z}^{\mathcal{V}}_{+}$, to be referred to as an \emph{intervention}, the \emph{LTM with intervention} is
\begin{equation}\label{LTD-intervention}
    x(t + 1) = \Phi_{\rho - h}(x(t))\,, \qquad \forall t \geq 0\,.
\end{equation}
In plain words, setting intervention $h$ decreases the threshold of each player $i$ by $h_i$ units.  Since the thresholds must remain non-negative, an intervention is said feasible if
\begin{equation}\label{eq:feasible}
h \le \rho\,.
\end{equation}
An intervention $h$ is associated with a separable cost $\sum_{i \in \mc V} \cost_i(h_i)$, where $\cost_i: \mathbb Z_+ \to \mathbb R_+$ is the cost function of node $i$, assumed to be non-decreasing and such that $\cost_i(0) = 0$ for every $i$ in $\mc V$.

The goal of the planner is to design a minimal cost intervention such that, for a given $\epsilon$ in $\tcb{(0, 1]}$,
\begin{equation}\label{fraction of state - 1 agents}
    \frac{1}{n} \sum_{i \in \mathcal{V}}x_i(t) \geq 1 - \epsilon \ , \qquad \forall t \geq n,
\end{equation}
for every $x(0)$ in $\mathcal{X}$, i.e., the fraction of agents in state $1$ after $t \ge n$ iterations is at least $1-\epsilon$, for every initial condition $x(0)$ in $\mathcal{X}$. By \eqref{LTD-intervention} and by Remark \ref{remark:monotone}, requiring that \eqref{fraction of state - 1 agents} holds true for every $x(0)$ in $\mc X$ is equivalent to
\begin{equation}\label{fraction of state - 1}
    \frac{1}{n} \sum_{i \in \mathcal{V}}((\Phi_{\rho-h})^{n}(\mathbf{0}))_{i}\geq 1 - \epsilon\,.
\end{equation}
Hence, the optimization problem for the planner is
\begin{equation}\label{optimal control}
    \min\limits_{h \in \mathbb{Z}_+^{\mathcal{V}}} \sum_{i \in \mathcal{V}} \cost_i(h_i) \ \ \text{s.t.} \ \eqref{eq:feasible},\eqref{fraction of state - 1}\,.
\end{equation}

\begin{remark}\label{remark:seeding}
\tcb{Problem \eqref{optimal control} is a variant of the LCIP where the links have integer weights. The original LCIP is known to be NP-complete, even in very simplified settings, see \cite{gunnecc2020least}.}
Observe that, when the cost functions are such that $\cost_{i}(h_{i}) = c_i$ for a positive constant $c_i$ for every $h_{i} > 0$ and node $i$ in $\mc V$, then it is optimal to set either the null intervention $h_{i} = 0$ or $h_{i} = \rho_{i}$ (so that necessarily $x_{i}(t) = 1$ for every $t>0$). In this case, the optimal intervention problem reduces to a \tcb{variant of the WTSS problem}. If, in particular, $c_i = 1$ for every node $i$ in $\mc V$, the problem reduces to finding a minimal cardinality set of agents such that, if they are forced to adopt state $1$, then for all $t \geq n$ the fraction of agents in state $1$ is at least $1 - \epsilon$, which is \tcb{a variant of the TSS problem}. 
\end{remark}

\subsection{Agent types}
In large-scale systems, detailed knowledge of the network topology and of the other features of the agents may be unavailable, while some aggregate statistical information might be accessible.
For this reason, it is convenient to consider the distribution of agents with certain features by assigning to every agent $i$ in $\mc V$ a type $\omega_i$ that uniquely determines its in-degree $\delta_i$, out-degree $\kappa_i$, threshold $\rho_i$ and cost function $\cost_i \ : \mathbb{Z}_{+} \rightarrow \mathbb{R}_+$. Let $\Omega$ denote a countable set of agent types and let $d, k, r$ in $\mathbb{R}^{\Omega}$ and $c \ : \mathbb{Z}_{+} \rightarrow \mathbb{R}^{\Omega}_+$ be such that
\begin{equation*}
d_{\omega_i} = \delta_i, \quad \! k_{\omega_i} = \kappa_i, \quad \! r_{\omega_i} = \rho_i, \quad \! c_{\omega_{i}}(\cdot) = \cost_i(\cdot), \quad \! \forall i \in \mathcal{V}\,.
\end{equation*}
We then define the vector $\statistics^{(0)}$ in $[0,1]^{\Omega}$ by
\begin{equation}\label{statistics}
    \statistics_{w}^{(0)} = |\{i \in \mathcal{V} \ : \ \omega_i = w\}| / n\,, \qquad \forall w \in \Omega\,,
\end{equation}
denoting the empirical distribution of types, referred to as the \emph{statistics} of the problem before the intervention.

Analogously, we can associate to every intervention $h$ a vector-valued function $\xi: \mathbb Z_+ \to [0,1]^{\Omega}$, with $\control_w(\eta)$ denoting the fraction of agents whose type before the intervention is $w$ and whose threshold is decreased by $\eta$ units by the intervention. Consistently with such interpretation and with \eqref{eq:feasible}, $\control$ must satisfy for every type $w$ in $\Omega$
\begin{equation}\label{eq:feasible_xi}
    \sum_{\eta = 0}^{r_w} \control_w(\eta) = \statistics_{w}^{(0)},\qquad \xi_w (\eta) = 0\,, \ \forall \eta > r_w\,.
\end{equation}
Moreover, since the number of nodes of type $w$ whose threshold is decreased by $\eta$ units due to the intervention has to be an integer, $\control$ must also satisfy
\begin{equation}\label{consistent - control}
	n \control_{w}(\eta) \in \mathbb{Z}_{+}, \qquad \forall \eta = 0, 1, \cdots, r_{w}\,,\ \forall w \in \Omega\,.
\end{equation}
We call a pair $(n,\xi)$ feasible if conditions \eqref{eq:feasible_xi}-\eqref{consistent - control} hold.

Now, let
$
\Omega_w = \{\tilde w \in \Omega: d_{\tilde w}=d_w, k_{\tilde w}=k_w , c_{\tilde w}=c_w , r_{\tilde w} > r_w\}
$
denote the set of types $\tilde w$ that differ from type $w$ only in the threshold and such that $r_{\tilde w} > r_w$.
By decreasing the agent thresholds, an intervention generates new network statistics $\statistics{(\control)}$ defined by
\begin{equation}\label{statistics-control}
    \statistics_{w}{(\control)} = \statistics_{w}^{(0)} + \!\!\! \sum_{\tilde w \in \Omega_w} \!\!\! \control_{\tilde w}(r_{\tilde w} - r_{w}) - \sum\limits_{\eta = 1}^{r_{w}} \control_{w}(\eta)\,,
\end{equation}
for every $w$ in $\Omega$. In particular, the second term accounts for the agents of types $\tilde w$ with $r_{\tilde w} > r_w$ whose threshold is decreased by $r_{\tilde w} - r_{w}$ units, and the last term is the fraction of agents of type $w$ whose threshold is decreased, so that the type of those agents varies.

\begin{remark}
The null statistical intervention, denoted $\control^{(0)}$, has entries $$\control_{w}^{(0)}(\eta) = \begin{cases} \statistics_{w}^{(0)}\,, \quad & \text{if } \eta = 0,\\ 
0 & \text{otherwise}\,.\end{cases}$$
With such intervention, $\statistics(\control^{(0)}) = \statistics^{(0)}\,.$
\end{remark}

\section{Optimal Intervention in large-scale networks}\label{Optimal Intervention in large-scale networks}
\tcb{Problem \eqref{optimal control} presents two main issues. First, it requires full network knowledge. Second, even if the network is known, the complexity scales badly with the network size.} To deal with these issues, we adopt the following approach. Using the local mean-field approach proposed in \cite{Rossi.ea:17}, we approximate the evolution of the fraction of agents in state $1$ under the LTM by a 1-dimensional recursive equation with output, and then reformulate the optimization in this framework. This approach, previously adopted by \cite{messina2024optimal} to address the optimal seeding problem in super-modular games (\tcb{which is strictly related to the WTSS problem}), has two main advantages. First, it allows to find solutions whose complexity scales well with the network size. Second, it relies on statistical knowledge instead of full knowledge of the network structure, of the thresholds, and of the cost functions of all agents.

\subsection{Local mean-field approximation of linear threshold model}
In order to sample uniformly from the set of problems with given statistics, we introduce the following notions. For an arbitrary probability distribution $p$ in $[0, 1]^{\Omega}$ and a vector $f$ in $\mathbb{R}_{+}^{\Omega}$, we let
\begin{equation*}
    \left\langle \statistics, f \right\rangle = \sum_{w \in \Omega} \statistics_{w} f_w\,.
\end{equation*}

\subsubsection{Well posedness:}
Given a statistics $\statistics(\control)$ and a finite set of nodes $\mathcal{V}$ of cardinality $n = |\mathcal{V}|$, a vector of types $\omega$ in $\Omega^{\mathcal{V}}$ with statistics $p(\xi)$ exists if and only if
\begin{equation}\label{consistent-integer}
     n \statistics_{w}(\control) \in \mathbb{Z}_+ \qquad \forall w \in \Omega.
\end{equation}  
Furthermore, for the existence of a network $\mathcal{N}$ with node set $\mathcal{V}$, in- and out-degrees $\delta_i = d_{\omega_i}$ and $\kappa_i = k_{\omega_i}$ for all $i$ in $\mathcal{V}$, and without selfloops, it is necessary and sufficient that
\begin{equation}\label{consistent-1moment}
    \langle \statistics(\control), d \rangle = \langle \statistics(\control), k \rangle\,,
\end{equation}
\begin{equation}\label{consistent-noselfloop}
    d_{w} + k_w \leq n \langle \statistics(\control), d \rangle,\qquad \forall w \in \Omega \ : \ \statistics_{w}(\control) > 0.
\end{equation}
Equation \eqref{consistent-1moment} guarantees that the total out-degree equals the total in-degree, while \eqref{consistent-noselfloop} ensures the existence of a set of links compatible with the degree distribution and that does not contain any self-loops.

We say that a pair $(n, \statistics(\control))$ is feasible if conditions \eqref{consistent-integer}-\eqref{consistent-noselfloop} are met. In particular, this is guaranteed if both $(n,\xi)$ and $(n,p^{(0)})$ are feasible pairs. 

We now describe our procedure to sample uniformly from the set of problems with $n$ nodes and statistics $p(\xi)$ for a feasible pair $(n,p(\xi))$. The procedure is divided in two steps. We first create $n$ nodes with a type distribution $\omega$ with statistics $p(\xi)$, and then choose a wiring uniformly at random to create the links of the network.
\begin{definition}\label{configuration model}
    Let $\mathcal{V} = \{1,\ldots, n\}$ and let the type profile $\omega$ in $\Omega^{\mathcal{V}}$ have empirical distribution $\statistics(\control)$. Let $l=n\langle p(\xi),d\rangle$ and $\mc E=\{1,\ldots,l\}$. Define $\theta:\mc E\to\mc V$ by letting  $\theta(e)$ in $\mc V$ be the unique value such that 
    $\sum_{i=1}^{\theta(e)-1}\kappa_{i}<e\le \sum_{i=1}^{\theta(e)}\kappa_{i}$, for all $e$ in $\mc E$. 
    Let $\lambda:\mc E\to\mc V$,  $\lambda(e)=\theta(\pi(e))$ for  $e$ in $\mc E$, where $\pi$ is sampled uniformly from the set of permutations of $\mc E$ such that 
    \begin{equation}\label{permutation-noloops}\theta(e)\ne\theta(\pi(e))\,,\qquad\forall e\in\mc E\,.\end{equation}
    Then, the \emph{(resampled) configuration model} $\mc C_{n,p(\xi)}$ is the triple $(\mc N,\rho,\gamma)$ of the random network $\mc N=(\mc V,\mc E,\theta,\lambda)$, the threshold vector $\rho$, and the vector of cost functions $\gamma$  with entries $\rho_i=r_{w_i}$ and $\gamma_i(\cdot)=c_{w_i}(\cdot)$, respectively.  
\end{definition}



Following the approach of \cite{messina2024optimal}, in the rest of this section we shall reformulate the intervention problem \eqref{optimal control}, expressed in terms of interventions $h$, in terms of the statistics $\xi$.
The next result is instrumental towards this goal. To better formalize it, we define
\begin{equation*}
    \varphi_{kr}(z) = \sum\limits_{u = r}^k \binom{k}{u} z^u (1-z)^{k-u}, \qquad \forall  0 \leq r \leq k\,,
\end{equation*}
and two maps $\psi_{\statistics(\control)},\phi_{\statistics(\control)}:[0,1] \to [0,1]$ by
\begin{equation}
    \psi_{\statistics(\control)}(z)  = \sum_{w\in\Omega} \statistics_{w}(\control)\varphi_{k_wr_w}(z)\,,
\end{equation}
\begin{equation}\label{eq:phi}
    \phi_{\statistics(\control)}(z)  =\frac{1}{\langle \statistics(\control), d \rangle} \sum_{w\in\Omega} \statistics_{w}(\control)d_w\varphi_{k_wr_w}(z)\,.
\end{equation}
These two maps are polynomials with
non-negative coefficients that depend on the network statistics $\statistics(\control)$, hence, in particular they are monotonically non-decreasing.

We can now formalize the main result of this section, which is at the basis of the problem reformulation. 
\begin{proposition}\label{prop:constraint}
    Let $\statistics(\control)$ be a probability distribution on $\Omega$ with finite moments
    $\langle \statistics(\control), d \rangle$, $\langle \statistics(\control), k \rangle$, $\langle \statistics(\control), d^2 \rangle$, and $\langle \statistics(\control), k^2 \rangle$.
    Let $(n, \statistics^{(n)}(\control))$ be a sequence of compatible pairs such that
    \begin{align*}
        \statistics^{(n)}(\control) &\stackrel{n\to+\infty}{\longrightarrow}\statistics(\control)\,,\\
        \langle\statistics^{(n)}(\control), d\rangle &\stackrel{n\to+\infty}{\longrightarrow}\langle \statistics(\control), d \rangle,\\
        \langle\statistics^{(n)}(\control), d^2 \rangle  &\stackrel{n\to+\infty}{\longrightarrow} \langle \statistics^{(n)}(\control), d^2\rangle,\\
        \langle\statistics^{(n)}(\control), k^2\rangle  &\stackrel{n\to+\infty}{\longrightarrow}\left\langle \statistics(\control), k^2\right\rangle.
    \end{align*}
	Let $\epsilon$ in $\tcb{(0,1]}$. If
    \begin{equation}\label{phi-ineq}
        \tcb{\phi_{\statistics(\control)}(z) > z}, \quad \forall z \in [0,\psi_{\statistics(\control)}^{-1}(1 - \epsilon)]\,,
    \end{equation}
    then \eqref{fraction of state - 1} holds true on a fraction of problems drawn from the (resampled) configuration model ensemble $\mathcal{C}_{n, \statistics^{(n)}(\control)}$ that converges to $1$ as $n$ grows large.
\end{proposition}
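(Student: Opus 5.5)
The plan is to reduce the statement to the local mean-field approximation theorem of \cite{Rossi.ea:17}, which \cite{messina2024optimal} already used in the same way. Consider the LTM on $\mathcal{C}_{n,\statistics^{(n)}(\control)}$ started from $x(0)=\mathbf{0}$. Define the recursion
\[
z(t+1)=\phi_{\statistics(\control)}(z(t)),\qquad y(t)=\psi_{\statistics(\control)}(z(t)),\qquad z(0)=0.
\]
Here $z(t)$ is the probability that the tail of a uniformly chosen link is active at time $t$. The local tree-like structure of the configuration model makes the $k_w$ out-neighbours of a type-$w$ node asymptotically independent, and each is active with probability $z(t)$. This explains the binomial tails $\varphi_{k_wr_w}$, and the size-biasing by $d_w$ in $\phi$ reflects that a neighbour is reached through one of its incoming links.

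Under the stated moment-convergence assumptions, the result of \cite{Rossi.ea:17} should give, for every fixed $t$,
\[
\frac1n\sum_i x_i(t)\to y(t)
\]
in probability as $n\to\infty$. I would take this as a black box. The only checks needed are that the resampled model of Definition~\ref{configuration model} satisfies its hypotheses, and that the moment conditions (the second moments of $d$ and $k$) control the number of short cycles, so that the neighbourhoods of most nodes up to depth $t$ are trees.

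Next comes the deterministic fixed-point analysis. Both maps are non-decreasing polynomials with $\phi(0)\ge0$, so $z(t)$ is non-decreasing and converges to $z^*$, the smallest fixed point of $\phi$ in $[0,1]$. Set $\bar z=\psi^{-1}(1-\epsilon)$, using the generalized inverse if $\psi$ is not strictly increasing. Condition \eqref{phi-ineq} says $\phi$ has no fixed point in $[0,\bar z]$, so $z^*>\bar z$. Since $\phi(z)-z$ is continuous and strictly positive on the compact interval $[0,\bar z]$, it is bounded below by some $\eta>0$ there. Hence $z(t)$ exceeds $\bar z$ after finitely many steps $T\le \lceil \bar z/\eta\rceil+1$, a bound independent of $n$. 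At that time $y(T)=\psi(z(T))>1-\epsilon$ strictly, using monotonicity and the choice of $\bar z$; I will need $\psi(z)>1-\epsilon$ for $z>\bar z$, which holds by the definition of the generalized inverse.

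To conclude, apply the convergence at the fixed time $T$. With probability tending to one,
\[
\frac1n\sum_i x_i(T)>1-\epsilon.
\]
By Remark~\ref{remark:monotone}, the trajectory from $\mathbf{0}$ is non-decreasing in time, because $x(1)\ge x(0)=\mathbf{0}$. Therefore $(\Phi_{\rho})^n(\mathbf{0})\ge(\Phi_\rho)^T(\mathbf{0})$ once $n\ge T$, and \eqref{fraction of state - 1} holds.

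The main obstacle is the first step. Its difficulty is less in the fixed-point argument than in ensuring that the mean-field theorem applies uniformly to the sequence $\statistics^{(n)}(\control)$, which converges to $\statistics(\control)$, rather than to a fixed distribution. This requires continuity of $\phi$ and $\psi$ in $\statistics$ under the given moment convergence. I would obtain it from dominated convergence, bounding the relevant sums by $\langle\statistics,d\rangle$ and using uniform integrability, which the convergence of second moments provides. The strict inequality $y(T)>1-\epsilon$ then leaves a margin that absorbs both the $o(1)$ error of the approximation and the discrepancy between $\statistics^{(n)}$ and $\statistics$.
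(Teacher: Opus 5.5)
\textbf{Overall.} Your route is the paper's: invoke Theorem~1 of \cite{Rossi.ea:17} as a black box, then analyse the deterministic recursion.

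\textbf{The gap.} The one genuine gap is the step the paper proves on top of that citation. The mean-field theorem concerns networks built from a \emph{uniformly random} permutation $\pi$ of $\mathcal E$, which may produce self-loops. By contrast, $\mathcal C_{n,\statistics^{(n)}(\control)}$ is that ensemble \emph{conditioned} on \eqref{permutation-noloops}. So the resampled model does not simply ``satisfy the hypotheses'' of the theorem. Controlling short cycles does not bridge this either, since local tree-likeness is internal to the cited result.

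The missing idea is a conditioning argument. The probability that a uniform $\pi$ satisfies \eqref{permutation-noloops} converges to a positive constant: $e^{-\nu/2}$ in the paper, with $\nu=\langle\statistics(\control),dk\rangle/\langle\statistics(\control),d\rangle-1$. Hence for the bad events $B_n=\{|\frac1n\sum_i x_i(t)-y(t)|>\delta\}$ one gets $\mathbb P(B_n\mid\text{no self-loops})\le\mathbb P(B_n)/\mathbb P(\text{no self-loops})\to0$. This is exactly where the second-moment hypotheses are used: they make $\nu$ finite.

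Conversely, the obstacle you single out, uniformity along $\statistics^{(n)}(\control)\to\statistics(\control)$, needs no separate dominated-convergence argument. The proposition's hypotheses are precisely those under which the cited theorem compares the sequence of ensembles with the recursion for the limit $\statistics(\control)$.

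\textbf{What is sound.} The rest of your argument works, and in two places it is tighter than the paper's appendix. First, you give an explicit compactness bound on the time needed for $z(t)$ to exceed $\bar z$. Second, you use Remark~\ref{remark:monotone} (the trajectory from $\mathbf 0$ is non-decreasing) to pass from a fixed time to time $n$ in \eqref{fraction of state - 1}. The paper leaves this step implicit, even though the mean-field approximation only holds at fixed $t$.

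\textbf{Two small corrections.} First, the output equation should be $y(t+1)=\psi_{\statistics(\control)}(z(t))$. Indeed $\frac1n\sum_i x_i(0)=0$, whereas $\psi_{\statistics(\control)}(0)$ is the fraction of zero-threshold nodes, so your conclusion holds at time $T+1$ rather than $T$. Second, under the paper's convention node $i$ reads $\sum_j A_{ij}x_j$, i.e., the heads of its out-links. So $z(t)$ is the probability that the \emph{head}, not the tail, of a uniformly chosen link is active. This matches the in-degree weighting in $\phi_{\statistics(\control)}$.
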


\begin{pf}
A similar result was already proved by \cite{messina2024optimal} \tcb{for the WTSS problem}. For completeness of presentation, we report its proof in Appendix \ref{app:1}.
\hfill $\blacksquare$
\end{pf}

Proposition \ref{prop:constraint} states that, given an intervention $\xi$ with associated statistics $p(\xi)$ satisfying \eqref{phi-ineq}, the dynamics will converge almost surely  to a configuration with at least a fraction $1-\epsilon$ of agents in state $1$, as far as the network is sufficiently large. This in particular implies that almost all interventions $h$ with associated statistics $\xi$, which have the same cost by construction, will drive the system towards a desirable configuration. This result is the key to reformulate the intervention problem in terms of $\xi$.

The idea behind the proof of Proposition \ref{prop:constraint} is that the fraction of state-$1$ adopters in the LTM can be approximated by the output $y(t)$ of 1-dimensional recursion of the form
\begin{equation}\label{recursion}
    y(t+1) = \psi_{\statistics(\control)}(z(t))\,, \qquad
	z(t+1) = \phi_{\statistics(\control)}(z(t))\,.
\end{equation}
As \cite{Rossi.ea:17} showed, $z(t)$ and $y(t)$ may be interpreted, respectively, as the fraction of links that point to state-$1$ agents and the fraction of state-$1$ agents, whose dynamics, in expectation with respect to the sampling from the configuration model ensemble, depend on the intervention $\xi$ by the functions $\phi_{p(\xi)}$ and $\psi_{p(\xi)}$. In particular, \eqref{phi-ineq} and the right-most equation in \eqref{recursion} ensure that the fraction of links pointing towards state-$1$ agents will be at least $\psi_{\statistics(\control)}^{-1}(1 - \epsilon)$ after a finite number of iterations. Thus, the left-most equation in \eqref{recursion} ensures that the fraction of nodes in state $1$ will be at least $1-\epsilon$, achieving the goal of the planner. Note also that, according to this interpretation, the fact that $\phi_{p(\xi)}$ and $\psi_{p(\xi)}$ are non-decreasing is consistent with the fact that $\Phi_\rho$ defined in \eqref{eq:Phi} is monotone, as observed in Remark \ref{remark:monotone}. An example of dynamics of $z(t)$ with initial condition $z(0) = 0$ is illustrated in Figure \ref{recursion figure}. For this example, the fraction of links pointing to state-$1$ agents (and therefore also the fraction of state-$1$ agents) converges to $1$. \tcb{For more details and intuition on the derivation of \eqref{recursion} we refer to \cite{Rossi.ea:17}}.
\begin{figure}
    \centering
    \includegraphics[scale=0.2]{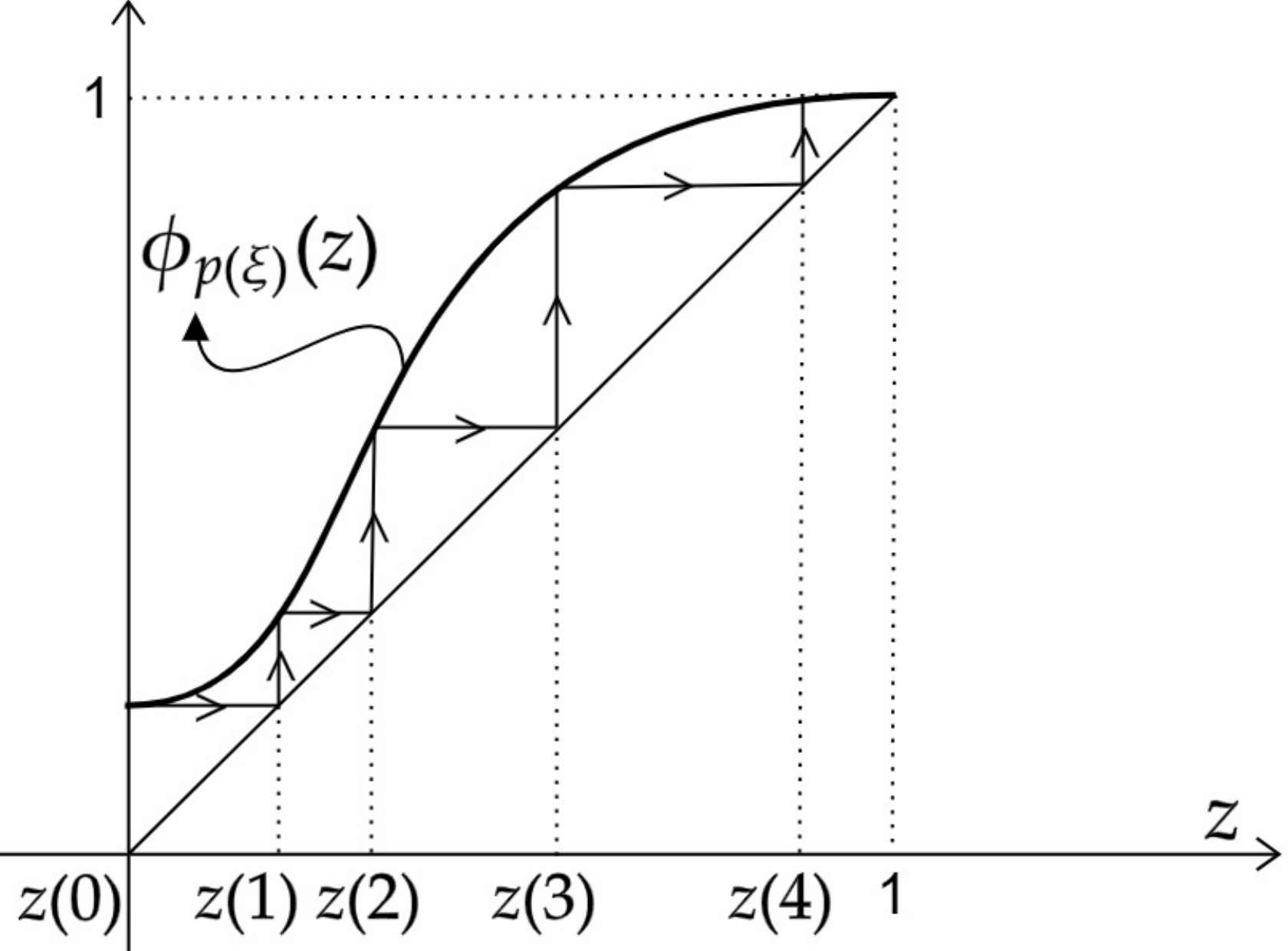}
	\caption{The evolution of fraction of links pointing to state-$1$ agents in the local mean-field local approximation according to \eqref{recursion} for a statistics $p(\xi)$. \label{recursion figure}}
\end{figure}

\subsection{Problem reformulation}
In this section, we reformulate the optimization problem \eqref{optimal control} in the statistical framework. In this framework, the goal of the planner is to find the minimal cost intervention $\xi$ such that \eqref{phi-ineq} holds true, so that Proposition \ref{prop:constraint} guarantees that on almost all large-scale networks with statistics $p(\xi)$ at least a fraction $1-\epsilon$ of nodes will converge to state $1$.

Since $c_{w}(\eta)$ is the cost for decreasing by $\eta$ units the threshold of an agent of type $w$, it follows by definition of $\control$ that the number of nodes of type $w$ whose threshold is decreased by $\eta$ units is $n \control_{w}(\eta)$. Therefore, it is natural to associate to intervention $\xi$ a cost 
$$C(\control) = \sum\limits_{w \in \Omega}\sum\limits_{\eta = 0}^{r_{w}} \control_{w}(\eta) c_w(\eta) \ge 0\,,$$
with $C(\xi^{(0)}) = 0$.
We can now reformulate our optimization problem into
\begin{equation}\label{optimization problem}
	\underset{\control: \mathbb Z_+ \to [0,1]^{\Omega} }{\inf} \ C(\control) \ \
	\text{s.t.} \ \eqref{eq:feasible_xi}, \eqref{phi-ineq}\,.
\end{equation}
The next section analyzes the properties of \eqref{optimization problem} and provides a numerical method to solve it.

\section{Problem solutions}\label{Problem solutions}
This section focuses on problem \eqref{optimization problem}. The next result is instrumental for its analysis.

\begin{proposition}\label{controlled recursion}
	For every intervention $\xi: \mathbb Z_+ \to [0,1]^{\mc V}$ that satisfies \eqref{eq:feasible_xi},
	\begin{equation}\label{eq:prop2}
		\phi_{\statistics(\control)}(z) = \phi_{p^{(0)}}(z) + \sum_{w \in \Omega} \sum\limits_{\eta = 1}^{r_w} a_{w}(\eta, z) \control_{w}(\eta),     
	\end{equation} 
	where 
	\begin{equation}\label{eq:a}
	a_w(\eta, z) = \frac{d_w  \left( \varphi_{k_w(r_w - \eta)}(z) - \varphi_{k_wr_w}(z) \right)}{\langle p^{(0)}, d \rangle} \geq 0\,.
	\end{equation}
\end{proposition}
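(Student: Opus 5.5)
The proof is a direct substitution of the definition \eqref{statistics-control} of $\statistics(\control)$ into the definition \eqref{eq:phi} of $\phi_{\statistics(\control)}$. Afterwards the double sum is re-indexed so that every term is attached to the type \emph{before} the intervention. The first step is to show that the normalizing constant does not change, i.e., $\langle \statistics(\control), d\rangle = \langle p^{(0)}, d\rangle$. An intervention moves mass only between types in $\Omega_w$, which share $d_w$ (and $k_w$, $c_w$). Summing \eqref{statistics-control} against $d$, each amount $\control_{\tilde w}(\eta)$ with $\eta\ge 1$ is removed from type $\tilde w$ and added to the unique type $w$ with the same in-degree, out-degree and cost and with $r_w = r_{\tilde w}-\eta$. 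This type exists in $\Omega$ by the implicit assumption that $\Omega$ contains all threshold values in $\{0,\dots,k_w\}$ for each degree/cost class. The two contributions carry the same weight $d_w=d_{\tilde w}$ and therefore cancel. In the same step I will note that $\statistics(\control)$ is still a probability vector.

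Next I write
\[
\sum_{w} \statistics_w(\control) d_w \varphi_{k_w r_w}(z) = \sum_w \statistics^{(0)}_w d_w\varphi_{k_wr_w}(z) + \sum_w \sum_{\tilde w\in\Omega_w} \control_{\tilde w}(r_{\tilde w}-r_w) d_w \varphi_{k_w r_w}(z) - \sum_w\sum_{\eta=1}^{r_w}\control_w(\eta) d_w\varphi_{k_wr_w}(z).
\]
The middle double sum is re-indexed by the pair $(\tilde w,\eta)$ with $\eta=r_{\tilde w}-r_w\in\{1,\dots,r_{\tilde w}\}$. The map $(w,\tilde w)\mapsto(\tilde w,\eta)$ is a bijection onto the set of pairs with $\eta\in\{1,\dots,r_{\tilde w}\}$, because $w$ is determined by $\tilde w$ and $\eta$. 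Using $d_w=d_{\tilde w}$ and $k_w=k_{\tilde w}$, the middle sum becomes $\sum_{\tilde w}\sum_{\eta=1}^{r_{\tilde w}}\control_{\tilde w}(\eta)d_{\tilde w}\varphi_{k_{\tilde w}(r_{\tilde w}-\eta)}(z)$. Combining this with the last sum and dividing by $\langle p^{(0)},d\rangle$ gives \eqref{eq:prop2} with $a_w$ as in \eqref{eq:a}. Since all terms are polynomials with nonnegative coefficients evaluated on $[0,1]$ (finite moments ensure absolute convergence), the rearrangement of the countable sums is legitimate.

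Nonnegativity of $a_w$ follows because $\varphi_{kr}(z)=\mathbb P(\mathrm{Bin}(k,z)\ge r)$ is non-increasing in $r$. Hence $\varphi_{k_w(r_w-\eta)}(z)\ge\varphi_{k_wr_w}(z)$ for $\eta\ge1$, and $d_w\ge 0$.

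The main obstacle I expect is the bookkeeping in the re-indexing step. I must check that the correspondence between $\Omega_w$ and the shifts $\eta$ is really one-to-one, which requires types to be uniquely determined by $(d,k,r,c)$, and that every shifted type lies in $\Omega$. I will state these as the standing conventions on $\Omega$ that the definition of $\statistics(\control)$ already presupposes.
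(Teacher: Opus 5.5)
Your proposal is correct and follows essentially the same route as the paper: invariance of the normalizer $\langle \statistics(\control),d\rangle=\langle p^{(0)},d\rangle$ (degrees are untouched), substitution of \eqref{statistics-control} into \eqref{eq:phi}, re-indexing the inflow term via $\eta=r_{\tilde w}-r_w$ to produce $\varphi_{k_w(r_w-\eta)}$, and $a_w(\eta,z)\ge 0$ from monotonicity of $\varphi_{kr}$ in $r$. Your extra checks make explicit conventions that the paper's proof uses implicitly: that $(w,\tilde w)\mapsto(\tilde w,\eta)$ is a bijection, that the shifted types belong to $\Omega$, and that rearranging the countable sums is legitimate.
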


\begin{pf}
Note that, since the intervention modifies only the agent thresholds and not the degrees, $\langle p(\xi),d \rangle = \langle p^{(0)},d \rangle$. Hence, it follows from \eqref{eq:phi} that
	\begin{equation}\label{eq:phi_p_eps}
		\phi_{\statistics(\control)}(z)  = \frac{1}{\langle p^{(0)}, d \rangle} \sum_{w\in\Omega} \statistics_{w}(\control)d_w\varphi_{k_wr_w}(z)
	\end{equation}
	Plugging \eqref{statistics-control} into this expression, we get that the right-hand side of \eqref{eq:phi_p_eps} is equal to the sum of three terms:
		$$\text{term 1} = \frac{1}{\langle p^{(0)}, d \rangle} \sum_{w\in\Omega} \statistics_{w}^{(0)}d_w\varphi_{k_wr_w}(z) = \phi_{{\statistics}^{(0)}}(z)\,,$$
		\begin{align*}
		\text{term 2}& =\frac{1}{\langle p^{(0)}, d \rangle} \sum_{w\in\Omega} \sum_{w' \in \Omega_w} \control_{w'}(r_{w'} - r_{w})d_w\varphi_{k_w, r_w}(z) \\
			& = \frac{1}{\langle p^{(0)}, d \rangle} \sum_{w\in\Omega} \sum\limits_{\eta = 1}^{r_{w}} \control_{w}(\eta) d_{w}\varphi_{k_w, r_w - \eta}(z)\,,
		\end{align*}
		$$\text{term 3} = - \frac{1}{\langle p^{(0)}, d \rangle} \sum_{w\in\Omega} \sum\limits_{\eta = 1}^{r_{w}} \control_{w}(\eta) d_{w}\varphi_{k_w, r_w}(z)\,.$$
	Summing the contributions, we obtain \eqref{eq:prop2}. The fact that $a_w(\eta,z) \ge 0$ is consequence of the definition of $\varphi_{kr}(z)$. \hfill $\blacksquare$
\end{pf}

Constraint \eqref{phi-ineq} is linear in $\control$  since $\phi_{p(\xi)}$ is linear in $\xi$ by Proposition \ref{controlled recursion}.  However, problem \eqref{optimization problem} remains challenging beacause the domain on which constraint \eqref{phi-ineq} must be satisfied depends on the intervention variable $\control$ through  $\psi_{p(\xi)}$. Moreover, constraint \eqref{phi-ineq} must be satisfied for a continuum of $z$, leading to an infinite-dimensional optimization problem.
In this section we propose a numerical method to address both these challenges and find close to optimal solutions of \eqref{optimization problem} for small $\epsilon > 0$.

\begin{remark}
\tcb{While the original LCIP \eqref{optimal control} is a non-linear program whose size scales with the number of agents, its local mean-field reformulation is a linear program whose size scales with the number of types. We remark that the local mean-field reformulation of the WTSS proposed in \cite{messina2024optimal} is also a linear program, but the two problems have different sizes, since in \eqref{optimization problem} the planner chooses for every type $w$ the fraction of agents whose is decreased by $\eta$ units, with $\eta = 0,\cdots,r_w$, while in the WTSS $\eta \in \{0,r_w\}$. The advantage of the LCIP is that the objective of steering the network towards a desirable configuration can be achieved with considerably smaller budget, as the example in Section \ref{Numerical simulations} illustrates.}
\end{remark}
    
To tackle the first challenge, we define an alternative optimization problem that will prove to be a relaxation of the former one. Consider the constraint
\begin{equation}\label{eq:new_constr}
\tcb{\phi_{\statistics(\control)}(z) > z}\,, \quad \forall z \in [0, 1 - \alpha_\epsilon]\,,
\end{equation}
where $d_{\min} = \min_{i \in \mathcal{V}} \delta_i$ and 
\begin{equation}\label{eq:alpha}
\alpha_\epsilon = \epsilon d_{\min} / \langle \statistics^{(0)}, d \rangle\,.
\end{equation}
Now, consider the optimization problem
\begin{equation}\label{optimization problem relaxed}
	\underset{\control: \mathbb Z_+ \to [0,1]^{\Omega} }{\inf} \ C(\control) \ \
\text{s.t.} \ \eqref{eq:feasible_xi},  \eqref{eq:new_constr}\,.
\end{equation}
This problem is easier to solve than \eqref{optimization problem} since $\alpha_\epsilon$ does not depend on the intervention variable $\xi$.
The next result illustrates the relation between the two problems.
\begin{proposition}\label{convergence relaxed problem}
    Let $\mathcal{P}_{\epsilon}$ denote the set of feasible $\control$ for problem \eqref{optimization problem} and $\mathcal{P}_{\alpha_{\epsilon}}$ denote the set of feasible $\control$ for problem \eqref{optimization problem relaxed}. Then:
	\begin{enumerate}
        \item[i)] $\mathcal{P}_{\alpha_{\epsilon}} \subseteq \mathcal{P}_{\epsilon}$.
        \item[ii)] $\lim_{\epsilon \to 0}  \mathcal{P}_{\epsilon} \setminus \mathcal{P}_{\alpha_{\epsilon}} = \emptyset$.
	\end{enumerate} 
\end{proposition}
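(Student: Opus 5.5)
For part i), the plan is to show that the interval in \eqref{eq:new_constr} contains the interval in \eqref{phi-ineq}, i.e.\ $\psi_{\statistics(\control)}^{-1}(1-\epsilon)\le 1-\alpha_\epsilon$ whenever $\control$ satisfies \eqref{eq:new_constr}. Here $\psi^{-1}$ is read as the generalized inverse $\inf\{z:\psi_{\statistics(\control)}(z)\ge 1-\epsilon\}$. The key tool is a comparison between $1-\psi$ and $1-\phi$. Write $\langle \statistics(\control),d\rangle=\langle \statistics^{(0)},d\rangle$, which holds as in the proof of Proposition~\ref{controlled recursion}. Since $d_w\ge d_{\min}$ for every type with positive mass and $1-\varphi_{k_wr_w}(z)\ge 0$, we get
\[
1-\phi_{\statistics(\control)}(z)=\frac{1}{\langle \statistics^{(0)},d\rangle}\sum_{w\in\Omega}\statistics_w(\control)\,d_w\bigl(1-\varphi_{k_wr_w}(z)\bigr)\ \ge\ \frac{d_{\min}}{\langle \statistics^{(0)},d\rangle}\bigl(1-\psi_{\statistics(\control)}(z)\bigr).
\]
Equivalently, $1-\psi_{\statistics(\control)}(z)\le \frac{\epsilon}{\alpha_\epsilon}\bigl(1-\phi_{\statistics(\control)}(z)\bigr)$ by \eqref{eq:alpha}.

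Now take $\control\in\mathcal P_{\alpha_\epsilon}$. Evaluating \eqref{eq:new_constr} at $z=1-\alpha_\epsilon$ gives $\phi_{\statistics(\control)}(1-\alpha_\epsilon)>1-\alpha_\epsilon$, i.e.\ $1-\phi_{\statistics(\control)}(1-\alpha_\epsilon)<\alpha_\epsilon$. Plugging into the inequality above yields $\psi_{\statistics(\control)}(1-\alpha_\epsilon)>1-\epsilon$. Hence $\psi^{-1}_{\statistics(\control)}(1-\epsilon)\le 1-\alpha_\epsilon$, so \eqref{phi-ineq} follows from \eqref{eq:new_constr}. Feasibility \eqref{eq:feasible_xi} is shared by both problems, so $\mathcal P_{\alpha_\epsilon}\subseteq\mathcal P_\epsilon$.

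For part ii), I would interpret the statement as $\limsup_{\epsilon\to 0}(\mathcal P_\epsilon\setminus\mathcal P_{\alpha_\epsilon})=\emptyset$. Take a $\control$ lying in the difference along a sequence $\epsilon_j\to 0$; we must reach a contradiction. For each $j$ there is $z_j\in(\psi^{-1}_{\statistics(\control)}(1-\epsilon_j),\,1-\alpha_{\epsilon_j}]$ with $\phi_{\statistics(\control)}(z_j)\le z_j$, while $\phi_{\statistics(\control)}(z)>z$ on $[0,\psi^{-1}_{\statistics(\control)}(1-\epsilon_j)]$. Since $\alpha_{\epsilon_j}\to 0$ and $\psi^{-1}_{\statistics(\control)}(1-\epsilon_j)\to z^*:=\inf\{z:\psi_{\statistics(\control)}(z)=1\}$, the violation points are squeezed into $[z^*,1]$.

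Two cases then remain, and I expect this to be the main obstacle. If $z^*=1$, which is the generic case where some type with positive threshold has positive mass, then $\phi_{\statistics(\control)}(z)>z$ holds on all of $[0,1)$. So for $j$ large, \eqref{eq:new_constr} holds on $[0,1-\alpha_{\epsilon_j}]\subset[0,1)$ and $\control\in\mathcal P_{\alpha_{\epsilon_j}}$, a contradiction. If $z^*<1$, then $\psi_{\statistics(\control)}(z)=1$ on $[z^*,1]$. This forces every type with positive mass to have $\varphi_{k_wr_w}\equiv 1$ there, which for polynomials means $r_w=0$. Then $\phi_{\statistics(\control)}\equiv 1$, so \eqref{eq:new_constr} holds trivially and again $\control\in\mathcal P_{\alpha_\epsilon}$. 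Making the limit of the generalized inverse rigorous, using continuity and monotonicity of the polynomial $\psi_{\statistics(\control)}$, is the delicate point. I would handle it directly with monotonicity rather than through set-convergence abstractions.
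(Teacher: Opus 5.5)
Your argument is correct. It rests on the same mechanism as the paper: the in-degree-weighted fraction of state-$0$ mass is at least $d_{\min}/\langle \statistics^{(0)},d\rangle=\alpha_\epsilon/\epsilon$ times the unweighted one. You carry this out by a different and more elementary route.

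For i), the paper argues through the random network. Under \eqref{eq:new_constr}, the configuration-model dynamics ends, with high probability, with at most a fraction $\alpha_\epsilon$ of links pointing to state-$0$ agents, hence at most a fraction $\epsilon$ of state-$0$ agents. This is then transferred back to $\psi_{\statistics(\control)}$ through the mean-field approximation. Your inequality $1-\psi_{\statistics(\control)}(z)\le \frac{\epsilon}{\alpha_\epsilon}\bigl(1-\phi_{\statistics(\control)}(z)\bigr)$ is the deterministic counterpart of that link count, read off directly from the definitions. Evaluating it at $z=1-\alpha_\epsilon$ proves a deterministic inclusion without any appeal to concentration results, and it makes $\psi^{-1}_{\statistics(\control)}(1-\epsilon)\le 1-\alpha_\epsilon$ explicit.

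For ii), the paper only notes that both endpoints $\psi^{-1}_{\statistics(\control)}(1-\epsilon)$ and $1-\alpha_\epsilon$ tend to $1$. You fix a precise meaning (the set-theoretic limit) and prove it via the dichotomy: either $\psi_{\statistics(\control)}<1$ on $[0,1)$, or $\phi_{\statistics(\control)}\equiv 1$. The paper's sketch passes over the second case. There $\psi_{\statistics(\control)}\equiv 1$, so under your convention $\psi^{-1}_{\statistics(\control)}(1-\epsilon)=0$ does not tend to $1$, although the conclusion still holds.

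Your version also makes clear that the statement is pointwise in $\control$. The convergence $\psi^{-1}_{\statistics(\control)}(1-\epsilon)\to 1$ is not uniform: an intervention leaving mass below $\epsilon$ on positive thresholds has $\psi^{-1}_{\statistics(\control)}(1-\epsilon)=0$. So this line of reasoning does not give the uniform closeness of the two sets that the paper's phrase ``get arbitrarily close'' may suggest.

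One caveat you should state explicitly: both you and the paper tacitly assume $d_{\min}>0$, i.e.\ $\alpha_\epsilon>0$. You divide by $\alpha_\epsilon$ in i) and use $[0,1-\alpha_{\epsilon_j}]\subset[0,1)$ in ii). If some node had in-degree $0$, then $\alpha_\epsilon=0$ and $\mathcal P_{\alpha_\epsilon}=\emptyset$, because $\phi_{\statistics(\control)}(1)=1$. Then i) holds trivially, but ii) fails: the intervention that sets every threshold to $0$ lies in every $\mathcal P_\epsilon$.
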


\begin{pf}
A similar result was first established in \cite{messina2024optimal}. For completeness of presentation, we report it in Appendix \ref{app:2}.\hfill $\blacksquare$
\end{pf}

While the two optimization problems \eqref{optimization problem} and \eqref{optimization problem relaxed} have the same objective function, Proposition \ref{convergence relaxed problem} establishes that the set of feasible interventions for the former problem contains the set of feasible interventions for the latter one, hence the solution of \eqref{optimization problem relaxed} is feasible for \eqref{optimization problem}. Moreover, for small values of $\epsilon$, the two sets of feasible interventions converge to each other. This suggests that the solutions of problem \eqref{optimization problem relaxed}, beyond being feasible for \eqref{optimization problem}, are also close to optimal as $\epsilon$ vanishes. This is key for our main result Theorem \ref{thm:uN}.

We now focus on problem \eqref{optimization problem relaxed}, and propose a discretization to obtain a finite-dimensional problem. In particular, we discretize the interval $[0, 1 - \alpha_{\epsilon}]$ in $N + 1$ equally spaced points $$z_i = (1 - \alpha_{\epsilon})i/N\,, \quad i = 0, 1, \cdots, N\,,$$ and require that
\begin{equation}\label{eq:constraint_discrete}
\phi_{\statistics(\control)}(z_i) - z_i \geq \Delta\,, \quad \forall i = 0,1,\cdots,N\,,
\end{equation}
for a properly chosen $\Delta>0$. 

\begin{lemma}\label{lemma:der}
For every $\control$ that satisfies \eqref{eq:feasible_xi},
    \begin{equation*}
        \left|\frac{\de}{\de z} (\phi_{\statistics(\control)}(z) - z) \right| \le \frac{d_{\max} 2^{k_{\max}+1} k_{\max}}{\left\langle \statistics^{(0)}, d\right\rangle} + 1\,, \quad \forall z \in [0,1]\,.
    \end{equation*}
    where $k_{\max} = \max_{i \in \mathcal{V}} \kappa_i$ and $d_{\max} = \max_{i \in \mathcal{V}} \delta_i$.
\end{lemma}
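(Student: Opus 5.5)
The plan is to use the triangle inequality to write
$$\left|\frac{\de}{\de z}(\phi_{\statistics(\control)}(z)-z)\right|\le \left|\phi'_{\statistics(\control)}(z)\right|+1,$$
and then bound $|\phi'_{\statistics(\control)}(z)|$ uniformly in $z\in[0,1]$ and in $\control$. As in the proof of Proposition \ref{controlled recursion}, the intervention leaves the degrees unchanged, so $\langle \statistics(\control),d\rangle=\langle\statistics^{(0)},d\rangle$. Hence
$$\phi'_{\statistics(\control)}(z)=\frac{1}{\langle \statistics^{(0)},d\rangle}\sum_{w}\statistics_w(\control)\,d_w\,\varphi'_{k_wr_w}(z).$$
The statistics $\statistics(\control)$ are a probability distribution supported on types present in the network. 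Their entries are nonnegative by \eqref{eq:feasible_xi} and \eqref{statistics-control}, and they sum to one, since interventions only relabel types. Therefore $d_w\le d_{\max}$ for every type with positive mass, and it suffices to show $|\varphi'_{kr}(z)|\le 2^{k+1}k$ for all $0\le r\le k\le k_{\max}$ and $z\in[0,1]$.

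For this, I differentiate $\varphi_{kr}(z)=\sum_{u=r}^k\binom{k}{u}z^u(1-z)^{k-u}$ term by term:
$$\frac{\de}{\de z}\Big[z^u(1-z)^{k-u}\Big]=u z^{u-1}(1-z)^{k-u}-(k-u)z^u(1-z)^{k-u-1}.$$
On $[0,1]$ each product of powers is at most $1$, so each derivative is bounded in absolute value by $u+(k-u)=k$. Summing with the binomial coefficients gives
$$|\varphi'_{kr}(z)|\le k\sum_{u}\binom{k}{u}\le k\,2^k\le 2^{k+1}k\le 2^{k_{\max}+1}k_{\max},$$
using that $2^kk$ is monotone in $k$. (A telescoping identity would give the sharper bound $k$, but the crude estimate already matches the statement.) Combining, $|\phi'_{\statistics(\control)}(z)|\le d_{\max}2^{k_{\max}+1}k_{\max}/\langle\statistics^{(0)},d\rangle$, and adding $1$ gives the claim.

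The only delicate point is justifying that $\statistics(\control)$ has nonnegative entries summing to $1$, so that the weighted sum over types is a convex combination bounded by the maximum. I would verify this from \eqref{statistics-control}: summing over $w$, the mass added to type $w$ from the sets $\Omega_w$ exactly cancels the mass removed from the types $\tilde w$, and nonnegativity follows because the removed mass $\sum_{\eta\ge1}\control_w(\eta)$ is at most $\statistics^{(0)}_w$ by \eqref{eq:feasible_xi}. Alternatively, one could bypass this by applying Proposition \ref{controlled recursion} and bounding the derivatives of $\phi_{\statistics^{(0)}}$ and of the $a_w(\eta,z)$ separately. That route yields a factor of $2$ from differencing two $\varphi$'s, consistent with the $2^{k_{\max}+1}$ in the statement.
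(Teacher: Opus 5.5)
Your proof is correct and follows the paper's route. Degree invariance gives $\langle \statistics(\control),d\rangle=\langle\statistics^{(0)},d\rangle$, each $|\varphi'_{k_wr_w}|$ is bounded by a crude binomial sum, and the weighted sum over types is bounded via $d_{\max}$ and $\sum_w \statistics_w(\control)=1$. Your term-by-term split of the derivative is actually cleaner than the paper's factorization through $z^{u-1}(1-z)^{k_w-u-1}$, which has negative exponents at $u=0$ and $u=k_w$. You also justify explicitly the nonnegativity and unit mass of $\statistics(\control)$, which the paper uses silently.

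One small correction to your closing aside: bounding $\phi'_{\statistics^{(0)}}$ and the $\partial_z a_w$ terms separately with the estimate $k2^k$ gives a factor $1+2=3$, not $2$. That route therefore only recovers the stated constant if you use the sharper bound $|\varphi'_{kr}|\le k$.
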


\begin{pf}
Using the fact that $\langle p(\epsilon),d\rangle = \langle p^{(0)},d\rangle$ and the definition of $\phi_{\statistics(\control)}$, it follows that
   \begin{equation}\label{eq:proof}
        \!\!\! \begin{aligned}
            &\!\left|\frac{\de}{\de z} (\phi_{\statistics(\control)}(z)) \right| = \left| \sum_{w}\frac{d_w}{\left\langle \statistics^{(0)}, d\right\rangle} \statistics_w(\control) \frac{\de}{\de z}\varphi_{k_wr_w}(z) \right|.
        \end{aligned} 
    \end{equation}
    Moreover,
    \begin{align*}
        \!\left|\frac{\de}{\de z} \varphi_{k_wr_w}(z)\right| \! &= \! \sum\limits_{u = r_w}^{k_w} \!\! \binom{k_w}{u} z^{u-1}(1 - z)^{k_w - u - 1}\left|-k_w z + u \right| \\
        & \leq \! \sum\limits_{u = r_w}^{k_w} \!\! \binom{k_w}{u} \!\left|-k_w z + u \right| \leq 2 \sum\limits_{u = r_w}^{k_w} \!\! \binom{k_w}{u} k_{\max} \\
        & \leq 2 k_{\max}  \sum_{u = 0}^{k_w} \binom{k_w}{u} = 2 k_{\max} 2^{k_{\max}},
    \end{align*}

    where the two inequalities follow from the fact that $z$ belongs to $[0,1]$ and the last equality from $2^{k_{\max}} = (1+1)^{k_{\max}} = \sum_{u=0}^{k_{\max}} \binom{k_w}{u}$. Plugging this expression into \eqref{eq:proof}, we get
    \begin{align*}
        \left|\frac{\de}{\de z} (\phi_{\statistics(\control)}(z) - z) \right| &\le \frac{2^{k_{\max} + 1} k_{\max}}{\langle \statistics^{(0)}, d\rangle} \sum_{w} d_w \statistics_w(\control) + 1 \\
        &\le \frac{2^{k_{\max}+1} k_{\max} d_{\max}}{\langle \statistics^{(0)}, d\rangle} \sum_w \statistics_w(\control) + 1\\
        & = \frac{2^{k_{\max}+1} k_{\max} d_{\max}}{\langle \statistics^{(0)}, d\rangle} + 1\,,
    \end{align*}
proving the statement. \hfill $\blacksquare$
\end{pf}

Given $N$ in $\mathbb N$, we now define the discretized problem
\begin{equation}\label{optimization problem discretized}
	\underset{\control: \mathbb Z_+ \to [0,1]^{\Omega} }{\inf} \ C(\control) \ \
\text{s.t.} \ \eqref{eq:feasible_xi}, \eqref{eq:constraint_discrete}\,.
\end{equation}
The next result establishes the relation between the discretized problem \eqref{optimization problem discretized} and the original problem \eqref{optimization problem}.  

\begin{thm}\label{thm:uN}
    Let $\epsilon>0$, $N$ in $\mathbb N$ and $\Delta \ge \Delta_N$, with
    \begin{equation}\label{eq:deltaN}
    \Delta_N := \frac{1 - \alpha_\epsilon}{2 N} \left( \frac{d_{\max} 2^{k_{\max}+1} k_{\max}}{\left\langle \statistics^{(0)}, d\right\rangle} + 1 \right)\,.    
    \end{equation} 
    Let $\control^{*}$ and $\xi^N$ be the solutions of \eqref{optimization problem} and \eqref{optimization problem discretized}, respectively.
    Then:
    \begin{enumerate}
        \item[i)] $\control^N$ is feasible for \eqref{optimization problem};
        \item[ii)] if $\Delta = \Delta_N$, $\lim_{\epsilon \to 0} \lim_{N \to +\infty} C(\control^N) = C(\control^*)$.
	\end{enumerate}
\end{thm}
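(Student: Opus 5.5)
Part i) rests on a Lipschitz argument. Let $L$ denote the bound of Lemma \ref{lemma:der}, so that $\Delta_N = (1-\alpha_\epsilon)L/(2N)$, and set $g_\xi(z) = \phi_{\statistics(\control)}(z) - z$. Take $\xi^N$ feasible for \eqref{optimization problem discretized} and any $z \in [0,1-\alpha_\epsilon]$. The grid has spacing $(1-\alpha_\epsilon)/N$, so some grid point $z_i$ satisfies $|z-z_i| \le (1-\alpha_\epsilon)/(2N)$. Since $\xi^N$ satisfies \eqref{eq:feasible_xi}, the mean value theorem together with Lemma \ref{lemma:der} gives
\begin{equation*}
g_{\xi^N}(z) \ge g_{\xi^N}(z_i) - L|z-z_i| \ge \Delta - \Delta_N \ge 0 .
\end{equation*}
This yields only the weak inequality $\ge 0$, whereas \eqref{eq:new_constr} is strict. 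To get strictness, I would refine the estimate: the bound in Lemma \ref{lemma:der} is not attained on an open neighborhood of a point, so the inequality is strict away from the grid points, while at the grid points $g \ge \Delta > 0$ directly. If this refinement turns out to be delicate, the fallback is to absorb an arbitrarily small slack into the choice of $\Delta$.

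Once $\xi^N$ satisfies \eqref{eq:new_constr}, it is feasible for \eqref{optimization problem relaxed}. Proposition \ref{convergence relaxed problem} i) then gives $\xi^N \in \mathcal{P}_{\alpha_\epsilon} \subseteq \mathcal{P}_\epsilon$, which proves i).

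For part ii) I first fix $\epsilon$ and let $N\to\infty$, with $\Delta = \Delta_N \to 0$. By i), $C(\xi^N) \ge C^*_\epsilon := \inf_{\mathcal{P}_\epsilon} C$. For the upper bound, I take a near-optimal $\hat\xi$ of \eqref{optimization problem relaxed}, which satisfies $g_{\hat\xi} > 0$ on the compact set $[0,1-\alpha_\epsilon]$. By continuity, $g_{\hat\xi} \ge m > 0$ for some $m$, so $\hat\xi$ is feasible for \eqref{optimization problem discretized} as soon as $\Delta_N \le m$. Hence
\begin{equation*}
\limsup_{N} C(\xi^N) \le \inf_{\mathcal{P}_{\alpha_\epsilon}} C .
\end{equation*}

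It remains to show $\inf_{\mathcal{P}_{\alpha_\epsilon}} C - \inf_{\mathcal{P}_\epsilon} C \to 0$ as $\epsilon\to 0$. For this I would use Proposition \ref{convergence relaxed problem} ii): asymptotically the two feasible sets coincide, so their infima of the same linear objective coincide in the limit. I would also use that, by linearity of $C$ and of the constraint in $\xi$ (Proposition \ref{controlled recursion}), convex combinations with a strictly feasible point perturb cost continuously. The reading $\lim_\epsilon C(\xi^*) = \lim_\epsilon \inf_{\mathcal{P}_\epsilon} C$ is how I interpret the statement, since $\xi^*$ depends on $\epsilon$.

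The main obstacle is this last step. It requires turning the set convergence of Proposition \ref{convergence relaxed problem} ii) into convergence of optimal values. I would handle it by taking, for each $\epsilon$, an optimal $\xi^*_\epsilon \in \mathcal{P}_\epsilon$ and a mixture $(1-t)\xi^*_\epsilon + t\,\bar\xi$. Here $\bar\xi$ is a fixed intervention with $g_{\bar\xi} > 0$ on all of $[0,1)$, for instance full seeding $\bar\xi_w(r_w) = \statistics^{(0)}_w$, which gives $\phi \equiv 1$. I would then show that a small $t$, vanishing with $\epsilon$, suffices to land in $\mathcal{P}_{\alpha_\epsilon}$, and that the resulting cost increase is at most $t\,C(\bar\xi) \to 0$.
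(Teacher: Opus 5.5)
Part i) follows the paper's argument: grid, Lemma \ref{lemma:der}, then Proposition \ref{convergence relaxed problem}(i). You are right that the mean-value estimate only gives $\ge 0$, whereas the paper simply writes $>$. Your refinement works, and monotonicity makes it immediate. Compare $z$ with the grid point $z_i\le z$ to its left: $\phi_{p(\xi^N)}(z)-z\ge \phi_{p(\xi^N)}(z_i)-z_i-(z-z_i)\ge \Delta-(1-\alpha_\epsilon)/N$. This is $>0$, because the bracket in \eqref{eq:deltaN} is at least $5$ once $k_{\max}\ge 1$; if $k_{\max}=0$ then $\phi\equiv 1$. The slack fallback would not suffice, since the theorem allows $\Delta=\Delta_N$.

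For fixed $\epsilon$ and $N\to\infty$ your route differs from the paper's, and it is correct. The paper takes the relaxed minimizer $\xi^{\alpha_\epsilon}$ and seeds an extra fraction $\beta_N=O(\Delta_N)$ of one type $q$ with $r_q>0$ to create the margin $\Delta_N$, with a separate case when no such type exists. This yields a rate $O(1/N)$, but it presupposes that the infimum over the strict-inequality constraint \eqref{eq:new_constr} is attained. Your near-optimal point plus compactness of $[0,1-\alpha_\epsilon]$ avoids attainment and the case split, but gives no rate. Mixing $\hat\xi$ with full seeding $\bar\xi$ at $t=\Delta_N/\alpha_\epsilon$ would give an $O(\Delta_N/\alpha_\epsilon)$ rate. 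Use $\xi^N\in\mathcal P_{\alpha_\epsilon}$ from i) as the lower bound; then $\lim_N C(\xi^N)$ exists and equals $\inf_{\mathcal P_{\alpha_\epsilon}}C$.

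The gap is in the $\epsilon\to 0$ step. The paper closes it only by citing Proposition \ref{convergence relaxed problem}(ii) and continuity of $C$, which is your first sentence. You rightly see that this needs more, but your mixture claim (a vanishing $t$ suffices at the same $\epsilon$) is unsupported.

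Membership $\xi^*_\epsilon\in\mathcal P_\epsilon$ controls $\phi-z$ only on $[0,z_0]$, where $z_0=\psi^{-1}_{p(\xi^*_\epsilon)}(1-\epsilon)$. This point can lie below $1-\alpha_\epsilon$ when $d_{\max}>d_{\min}$. Beyond $z_0$, the best fixed-$\epsilon$ estimate is $1-\phi_p(z)\le \frac{d_{\max}}{\langle p^{(0)},d\rangle}(1-\psi_p(z))\le \frac{d_{\max}}{d_{\min}}\alpha_\epsilon$. So at $z=1-\alpha_\epsilon$ you only know $\phi_p(z)-z\ge -(\frac{d_{\max}}{d_{\min}}-1)\alpha_\epsilon$, while the mixing gain there is $t\alpha_\epsilon$. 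This forces $t>1-d_{\min}/d_{\max}$, which does not vanish; the Lipschitz bound of Lemma \ref{lemma:der} is even worse.

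The missing idea is compactness across $\epsilon$. With finitely many types, the $\xi$ satisfying \eqref{eq:feasible_xi} form a compact polytope on which $C$, $\phi_{p(\xi)}$ and $\psi_{p(\xi)}$ are affine.

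1. Take $\xi_\epsilon\in\mathcal P_\epsilon$ with $C(\xi_\epsilon)\le \inf_{\mathcal P_\epsilon}C+\epsilon$, and a limit point $\xi_0$.
2. Fix $z<1$. Then $1-\psi_{p(\xi_\epsilon)}(z)\to 1-\psi_{p(\xi_0)}(z)>0$, unless $p(\xi_0)$ puts all its mass at threshold $0$, in which case $\phi_{p(\xi_0)}\equiv 1$. So eventually $z<\psi^{-1}_{p(\xi_\epsilon)}(1-\epsilon)$, and hence $\phi_{p(\xi_0)}(z)\ge z$ on $[0,1]$.
3. Only now apply your mixture. The point $(1-t)\xi_0+t\bar\xi$ satisfies $\phi(z)-z\ge t(1-z)>0$ on $[0,1)$. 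It therefore lies in every $\mathcal P_{\alpha_{\epsilon'}}$, giving $\inf_{\mathcal P_{\alpha_{\epsilon'}}}C\le C(\xi_0)+tC(\bar\xi)$ for every $t>0$.
4. Both infima are nondecreasing as $\epsilon\downarrow 0$, $\inf_{\mathcal P_\epsilon}C\le \inf_{\mathcal P_{\alpha_\epsilon}}C$, and $C(\xi_0)=\lim_\epsilon \inf_{\mathcal P_\epsilon}C$. Hence the two limits coincide.
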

    
\begin{pf}
    i) Since $\control^{N}$ is solution of \eqref{optimization problem discretized}, $\xi^N$ needs to satisfy \eqref{eq:constraint_discrete} with $\Delta_N$ defined in \eqref{eq:deltaN}, that is, $$\phi_{\statistics(\control^{N})}(z_i) - z_i \geq \Delta\,, \quad \forall i = 0,\cdots, N\,.$$ 
    Since $z_{i+1} - z_i = (1 - \alpha_{\epsilon})/N$, standard properties of the first derivative imply that, for every $z$ in $[0, 1 - \alpha_{\epsilon}]$,
    $$
        \tcb{\phi_{\statistics\left(\control^{N}\right)}(z)-z > \Delta - \frac{(1-\alpha_\epsilon)}{2N} \left|\frac{\de}{\de z} \left(\phi_{\statistics\left(\control^{N}\right)}(z)-z\right)\right|}.
    $$
    Since $\Delta \ge \Delta_N$, by \eqref{eq:deltaN} and by Lemma \ref{lemma:der}, we get that $$\tcb{\phi_{\statistics\left(\control^{N}\right)}(z) - z > 0}\,,\quad \forall z \in [0,1-\alpha_\epsilon]\,.$$ Hence, $\control^{N}$ satisfies \eqref{eq:new_constr} and is therefore feasible for problem \eqref{optimization problem relaxed}. Proposition \ref{convergence relaxed problem}(i) then implies that $\xi^N$ is also feasible for problem \eqref{optimization problem}.
		
    ii) Let $\control^{\alpha_\epsilon}$ be the solution of problem \eqref{optimization problem relaxed}. Consider the case when there exists a type $q$ in $\Omega$ with positive empirical frequency $p_q(\xi^{\alpha_\epsilon})>0$ and positive threshold $r_q>0$. Now, let 
    \begin{equation}\label{eq:def:beta}
    \beta_N = \frac{\Delta_N}{\min_{\zeta \in [0,\alpha_\epsilon]} a_{q}(r_{q}, \zeta)}\,,
    \end{equation}
    well defined because $a_w(\eta,z)>0$ for every $z$ in $[0,\alpha_\epsilon]$, $\eta>0$ and type $w$ in $\Omega$ (cf. \eqref{eq:a}). 
    Note now from \eqref{eq:deltaN} and \eqref{eq:def:beta} that 
    \begin{equation}\label{eq:beta}
    \Delta_N \stackrel{N \to +\infty}{\longrightarrow} 0\,, \qquad \beta_N \stackrel{N \to +\infty}{\longrightarrow} 0\,.
    \end{equation}
    Let $q'$ in $\Omega$ be the type differing from $q$ only in the threshold value $r_{q'} = 0$.
    Hence, there always exists a sufficiently large $N$ such that
    \begin{equation}\label{eq:gamma_tilde}
    	p_w(\hat{\control}) = 
    	\begin{cases}
    		p_w(\xi^{\alpha_\epsilon}) - \beta_N, &\text{if} \ w = q,\\
    		p_w(\xi^{\alpha_\epsilon}) + \beta_N, &\text{if} \ w = q',\\
    		p_w(\xi^{\alpha_\epsilon}) & \text{otherwise}\,,
    	\end{cases}
    \end{equation}
    satisfies $\statistics(\hat{\control}) \in[0,1]^{\Omega}$ and is therefore a statistics.
    Moreover, Proposition \ref{controlled recursion} implies that, for every $z$ in $[0, 1 - \alpha_{\epsilon}]$,
    \begin{equation}\label{eq:phi_gamma_hat}
    	\begin{aligned}
    		\phi_{\statistics(\hat{\control})}(z) & = \phi_{\statistics(\xi^{\alpha_\epsilon})}(z) + a_{q}(r_{q}, z) \cdot \beta_N \\ 
    		& \geq \phi_{\statistics(\xi^{\alpha_\epsilon})}(z) + \Delta_N \geq z + \Delta_N,
    	\end{aligned}
    \end{equation}
    where the last inequality follows from the fact that $\xi^{\alpha_\epsilon}$ is feasible for problem \eqref{optimization problem relaxed} (in fact, it is its solution) and therefore satisfies \eqref{eq:new_constr}. Eq. \eqref{eq:phi_gamma_hat} implies that $\hat{\control}$ satisfies \eqref{eq:constraint_discrete} and is therefore feasible for problem \eqref{optimization problem discretized}, whose solution is $\control^N$. Moreover, $\control^N$ is feasible for problem \eqref{optimization problem relaxed}, whose solution is $\xi^{\alpha_\epsilon}$. Therefore,
    \begin{equation}\label{eq:bounds}
    	C(\hat{\control}) \geq C(\control^N) \geq C(\control^{\alpha_\epsilon})\,.
    \end{equation} 
    Now, note from \eqref{eq:gamma_tilde} that $$C(\hat{\control}) = C(\xi^{\alpha_\epsilon}) + c_q(w_q) \beta_N \stackrel{N \to +\infty}{\longrightarrow} C(\control^\alpha)\,.$$
    This, together with \eqref{eq:CN}, proves that
    \begin{equation}\label{eq:CN}
    	C(\control^N) \stackrel{N \to +\infty}{\longrightarrow} C(\control^{\alpha_\epsilon})\,.
    \end{equation}  
    Finally, Proposition \ref{convergence relaxed problem}(ii) and continuity of $C(\xi)$ together prove that
    $$ 
    C(\xi^{\alpha_\epsilon}) \stackrel{\epsilon \to 0}{\longrightarrow} C(\xi^*)\,.
    $$
    Hence, for this case the proof is concluded by combining this equation with \eqref{eq:CN}. 
    
    We now consider the case when $p(\xi^{\alpha_\epsilon})$ is such that $p_w(\xi^{\alpha_\epsilon}) = 0$ for every type $w$ in $\Omega$ with positive threshold $r_w > 0$. Observe from \eqref{eq:alpha} that $\alpha_{\epsilon} > 0$ because $\epsilon > 0$. Since $\xi^{\epsilon_\alpha}$ is optimal for problem \eqref{optimization problem relaxed} and $\alpha_\epsilon>0$, it follows that $\xi^{\alpha_\epsilon} = \xi^{(0)}$. Therefore, it follows from \eqref{eq:phi} that $\phi_{p^{(0)}}(z) = 1$ for every $z$ in $[0,\alpha_\epsilon]$. This, together with \eqref{eq:beta} and with the fact that $C(\xi^{(0)}) = 0$, implies that, as $N$ grows large, $\xi^N = \xi^* = \xi^{(0)}$. The proof is now concluded.
%
%
     \hfill $\blacksquare$
\end{pf}

Theorem \ref{thm:uN} states that the solutions of \eqref{optimization problem discretized} are feasible and close to optimal for problem \eqref{optimization problem} as long as $N$ grows large and $\epsilon$ vanishes. Remarkably, \eqref{optimization problem discretized} is a linear program with finitely many constraints whose size scales with the number of agent types, much easier to solve than the original statistical problem \eqref{optimization problem}, and, even more importantly, of problem \eqref{optimal control}.

\section{Numerical simulations}\label{Numerical simulations}
\begin{figure}
	\centering
	\includegraphics[scale=0.20]{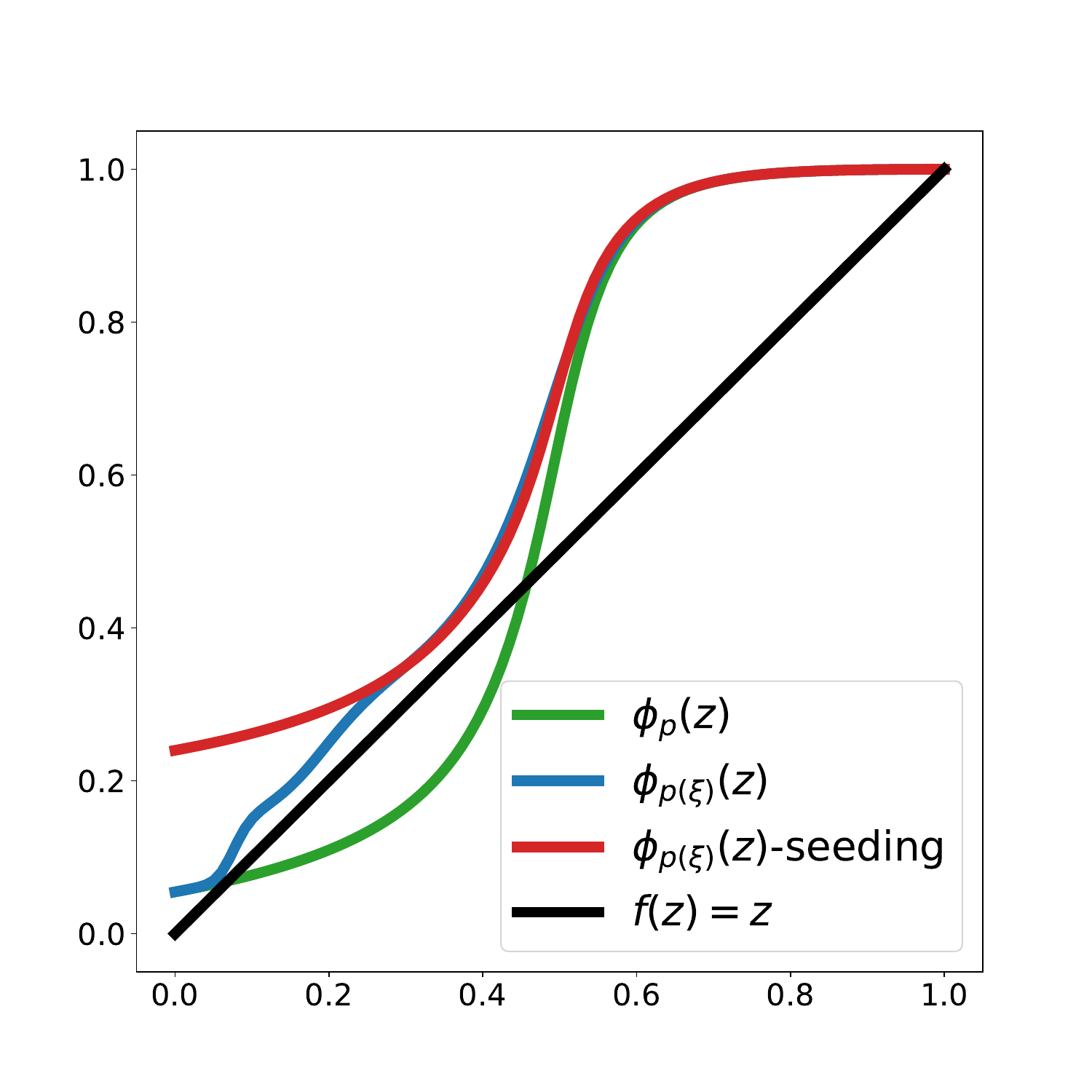}
	\caption{Evolution function for the fraction of links pointing to state-$1$ agents with initial statistics $p^{(0)}$, with  optimal intervention statistics $p(\xi^N)$, and with optimal seeding $p(\xi^{N,s})$, for the Epinions network.} \label{fig:interventions}
\end{figure}

\tcb{In this section, we present numerical experiments on the real-world networks Epinions\footnote{Retrieved from https://networkrepository.com} and Power Grid Network.\footnote{Retrieved from https://websites.umich.edu/$\sim$ mejn/netdata/} }

\tcb{The network Epinions contains $26588$ nodes and $100120$ undirected links. 
For every type $w$ in $\Omega$, we set threshold $r_w = \left\lfloor k_w/2 \right\rfloor$ and cost $c_w(\eta) = \eta$. This defines a triple $(\mc N,\rho,\gamma)$. We extract the statistics $p^{(0)}$ and derive the optimal statistical intervention $\xi^N$ by solving problem \eqref{optimization problem discretized} with $\epsilon = 0.1$, $N = 100$ and $\Delta = 0.05$ and the optimal statistical intervention $\control^{N,s}$ for the related WTSS problem studied in \cite{messina2024optimal}. Such problem is obtained by setting $c_w(0) = 0$ and $c_w(\eta) = r_w$ for $\eta > 0$, so that for the planner it is optimal either not to intervene on an agent or to set her threshold to $0$.
We then implement a random intervention that complies with the optimal statistical one, and compare the fraction of state-$1$ agents observed on the network under such interventions with the fraction of state-$1$ agents predicted by recursion \eqref{recursion}.}

\tcb{Fig. \ref{fig:interventions} illustrates the recursion function driving the evolution of the fraction of links pointing to state-$1$ agents in the local mean-field approximation with the initial statistics $p^{(0)}$, with statistics $p(\xi^N)$ corresponding to the optimal intervention, and with the statistics $p(\xi^{N,s})$ corresponding to the optimal seeding. Intuitively speaking, the seeding is suboptimal whenever the curve $\phi_{\statistics^{(0)}}(z)$ is above $z$ for small values of $z$ and below $z$ for large values of $z$. While in such a case the intervention $\xi^N$ allows to target the curve in precise regions of the domain, seeding interventions are forced to increase the entire curve, especially for small values of $z$, producing suboptimal intervention costs.}
	
\tcb{To confirm the validity of the statistical approach, we then implement random interventions on the network that comply with the statistical interventions $\xi^N$ and $\xi^{N,s}$ and simulate the TM with initial condition $x(0) = \mathbf 0$ on the Epinions network. Figure \ref{fig:SDynamic-recursion} illustrates the evolution of the fraction of state-$1$ agents on the networks versus the one predicted by recursion \eqref{recursion} based only on the statistics. Note that the seeding intervention, due to its higher cost, is able to drive the agents towards state $1$ faster than the optimal intervention. However, also the optimal intervention drives the system towards the desired configurations in a finite number of steps, as predicted by \eqref{recursion}. We remark that our interventions are designed without relying on the exact network knowledge, suggesting that designing interventions based on the problem statistics is a valid approach even for real networks that are not generated from a configuration model ensemble. Moreover, our interventions apply to an arbitrary threshold distribution, in contrast with those provided by \cite{Kempe.ea:03}, which assumes random and uniformly distributed thresholds.}
\begin{figure}
	\centering
	\includegraphics[scale=0.20]{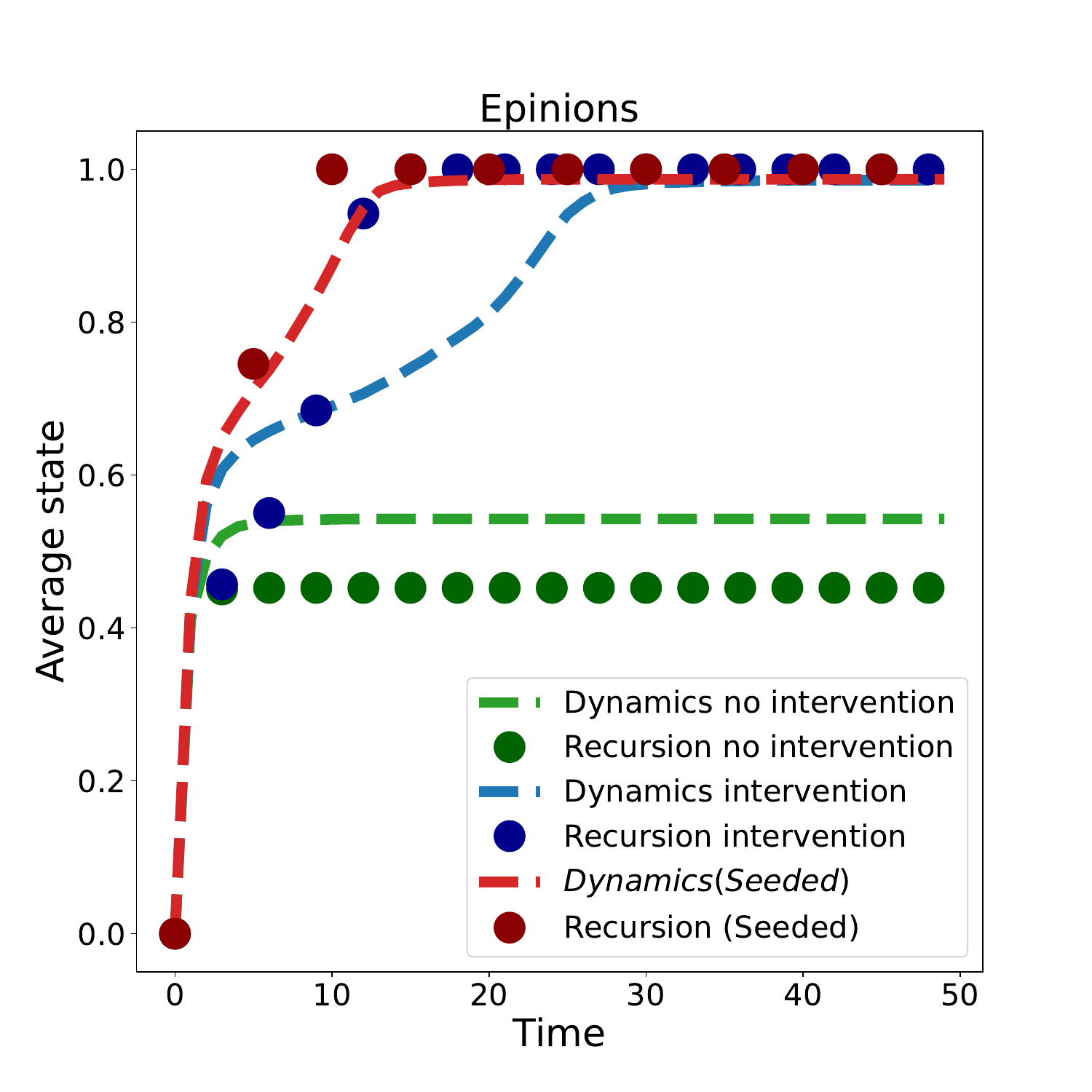}
	\caption{The evolution of the fraction of agents in state $1$ on the \tcb{Epinions network} versus the one predicted by the recursion \eqref{recursion} for the example of Section \ref{Numerical simulations}. \label{fig:SDynamic-recursion}}
\end{figure}

\tcb{We then conduct a numerical experiment on the Power Grid Network\footnote{Retrieved from https://websites.umich.edu/$\sim$ mejn/netdata/}. This is an undirected, unweighted network representing the topology of the Western States Power Grid of the United States, with $4941$ nodes and $6594$ undirected links. The LCIP problem on this network was studied in \cite{Cordasco:2015} with thresholds $\rho_i$ chosen uniformly at random in $\{1, \dots, \kappa_i\}$, with cost functions $\gamma_i(\eta) = \eta$ and with $\epsilon = 0.3$. We extract the statistics $p^{(0)}$ and derive $\xi^N$ by solving problem \eqref{optimization problem discretized} with $N = 100$ and $\Delta = 0.05$.
By averaging over $10$ instances of random thresholds, we observe that our interventions achieve the desired fraction of agents in state $1$ and outperform those proposed by \cite{Cordasco:2015} reducing the intervention cost by about $25\%$, despite $N$ being quite small. We remark that our solutions do not rely on the exact network knowledge, as instead done by \cite{Cordasco:2015}.}

\section{Conclusion}\label{sec:conclusion}
We have studied the design of optimal solutions of the LCIP problem in the LTM in large-scale networks, where a planner can modify agent thresholds to guarantee that a predetermined fraction of agents adopt state-$1$ in finite time. We built on a local mean-field approximation of the dynamics to formulate the corresponding intervention problem as a linear program with an infinite number of constraints, relying on a statistical rather than exact network knowledge. We then propose a method to transform the resulting infinite programming into a finite linear program with a dimension that does not scale with the network size. The proposed methodology is validated numerically. We expect that variants of our techniques can be applied to different problems and network dynamics.


\section*{DECLARATION OF GENERATIVE AI AND AI-ASSISTED TECHNOLOGIES IN THE WRITING PROCESS}
During the preparation of this work, AI-based tools were used solely for language proofreading and did not contribute to the scientific content of the manuscript.

\section*{FUNDING}
This work was partially supported by the MUR-funded Research Project PRIN 2022 ``Extracting
Essential Information and Dynamics from Complex Networks'' (Grant  Agreement number 2022MBC2EZ).

\bibliography{ref}             
                                                   







\appendix
\section{Proof of Proposition \ref{prop:constraint}}\label{app:1}
Let $\nu = \left\langle \statistics(\control), dk \right\rangle / \left\langle \statistics(\control), d \right\rangle - 1$. Finiteness of the first and second moments of $\statistics(\control)$ implies that $0 \leq \nu < +\infty$.
An argument analogous to \cite[Theorem 7.2]{VanDerHofstad:16} implies that the probability that a uniform random permutation $\pi$ satisfies \eqref{permutation-noloops} converges to $e^{-\nu/2}$ as $n$ grows large.
    Let $Y(t) = \frac{1}{n} \sum_i x_i(t)$ and $Z(t) = \frac{1}{l} \sum_i(\delta_i x_i(t))$ denote the fraction of state-$1$ agents and the fraction of links pointing to state-$1$ agents in the LTM \eqref{LTD-intervention}. Define $z(t)$ and $y(t)$ recursively by putting $y(0) = z(0) = 0$ and 
    \begin{equation}\label{recursion threshold model}
        y(t+1) = \psi_{\statistics(\control)}(z(t)) \qquad
		z(t+1) = \phi_{\statistics(\control)}(z(t)),
    \end{equation}
for $t\geq 0$. 
    Then, \cite[Theorem 1]{Rossi.ea:17} implies that $Z(t)$ and $Y(t)$ are arbitrarily close to $z(t)$ and $y(t)$, respectively, on all but an asymptotically vanishing fraction of networks $\mathcal{N} = (\mathcal{V}, \mathcal{E}, \theta, \lambda)$, where $\lambda(e) = \theta(\pi(e))$ for a uniform random permutation $\pi$ of $\mathcal{E}$. Since $\pi$ satisfies \eqref{permutation-noloops} with probability bounded away from $0$ asymptotically, this implies that $Z(t)$ and $Y(t)$ are arbitrarily close to $z(t)$ and $y(t)$, respectively, on the configuration model $\mathcal{C}_{n,\statistics^{(n)}(\control)}$ with probability approaching $1$ as $n$ grows large.
    Observe that \eqref{phi-ineq} and the second equation in \eqref{recursion threshold model} imply that there exists a finite time $\tau$ such that \tcb{$z(t) > \psi_{\statistics(\control)}^{-1}(1-\epsilon)$} for all $t \geq \tau$.
    The first equation in \eqref{recursion threshold model} then implies that \tcb{$y(t) > 1-\epsilon$} for $t > \tau$. This implies that, with probability converging to $1$ as $n$ grows large $\frac{1}{n} \sum_i x_i(t) = Y(t) \tcb{> 1 - \epsilon}$ for $t > \tau$, i.e., \eqref{fraction of state - 1 agents} holds true, thus completing the proof.
\section{Proof of Proposition \ref{convergence relaxed problem}}\label{app:2}
    Following the steps of the proof of Proposition \ref{prop:constraint}, constraint \eqref{eq:new_constr} implies that with high probability the dynamics converges to a configuration with at most a fraction $\alpha_{\epsilon}$ of links pointing to agents with state $0$ so that the fraction of agents with state $0$ is upper-bounded by $\alpha_{\epsilon} \langle p, d \rangle / d_{\min} = \epsilon$. This in turn proves (c.f. proof of Proposition \ref{prop:constraint} and \cite{Rossi.ea:17}) that \tcb{$\psi_{\statistics(\control)}(z) > 1 - \epsilon$}, so that \tcb{$z > \psi_{\statistics(\control)}^{(-1)}(1 - \epsilon)$} and \eqref{phi-ineq} is satisfied. 
    To prove ii), notice that $\psi_{\statistics(\control)}^{-1}(1) = 1$ for all $\control$. By continuity of $\psi_{\statistics(\control)}$ and by definition of $\alpha_{\epsilon}$, $$\psi_{\statistics(\control)}^{-1}(1 - \epsilon) \stackrel{\epsilon \to 0^{+}}{\longrightarrow} 1\,, \quad 1 - \alpha_\epsilon \stackrel{\epsilon \to 0^{+}}{\longrightarrow} 1\,,$$ so that the sets of feasible interventions for \eqref{optimization problem} and \eqref{optimization problem relaxed}, respectively,  get arbitrarily close as $\epsilon$ vanishes. 
\end{document}